\newtheorem{corollary}{Corollary}
\newtheorem{theorem}{Theorem}
\newtheorem*{helly}{Helly's Theorem}
\newtheorem*{fractional}{Fractional Helly Theorem}
\begin{document}

\title{A Note on Testing Intersection of Convex Sets \\ in Sublinear Time}
\author{Israela Solomon}

\maketitle

We present a simple sublinear time algorithm for testing the following geometric property. Let $P_1, ..., P_n$ be $n$ convex sets in $\mathbb{R}^d$ ($n \gg d$), such as polytopes, balls, etc. We assume that the complexity of each set depends only on~$d$ (and not on the number of sets $n$). We test the property that there exists a common point in all sets, i.e. that their intersection is nonempty. Our goal is to distinguish between the case where the intersection is nonempty, and the case where even after removing many of the sets the intersection is empty. In particular, our algorithm returns PASS if all of the $n$ sets intersect, and returns FAIL with probability at least $1-\epsilon$ if no point belongs to $\frac{\alpha}{d+1} n$ sets, for any given $0 < \alpha, \epsilon < 1$.

We call a collection of $q$ convex sets $P_{i_1},...,P_{i_q}$ a '$q$-tuple' (the order of elements in the tuple is not important).

The motivation for our algorithm comes from the following well known theorem by Eduard Helly \cite{Helly1923}\cite{HellyEnglish}\cite{HellyWiki}.

\begin{helly}
    Let $X_1, ..., X_n$ be a finite collection of convex sets in $\mathbb{R}^d$, with $n>d$. The intersection of every $d+1$ of these sets is nonempty, if and only if the whole collection has a nonempty intersection.
\end{helly}

Helly's theorem inspires us to consider the following algorithm.

\: \:
\RestyleAlgo{boxruled}
\begin{algorithm}[H]
\DontPrintSemicolon
\caption{Tester for nonempty intersection}

    \For {$i = 1, 2, ..., t$} {
        Pick $(d+1)$-tuple $P_{i_1}, P_{i_2}, ..., P_{i_{d+1}}$ with uniform distribution \;
        \If {$\cap_{j=1}^{d+1} P_{i_j} = \emptyset$}
            {\Return FAIL}
    }
    \Return PASS \;
\end{algorithm}
\: \: \:

\begin{theorem} \label{thm_correctness}
	For $t = \log _\alpha \epsilon$, the algorithm returns PASS if the intersection of all sets is nonempty, and returns FAIL with probability at least $1-\epsilon$ if the intersection of every $\frac{\alpha}{d+1} n$ sets is empty. The running time of the algorithm does not depend on $n$.
\end{theorem}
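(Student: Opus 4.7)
The plan is to split the argument into three parts: completeness, soundness, and a short running-time remark. Completeness is immediate: if $\bigcap_{i=1}^n P_i \neq \emptyset$, then any point in this intersection lies in every $(d+1)$-tuple, so the FAIL condition inside the loop never triggers and PASS is returned deterministically.

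For soundness, I would let $p$ denote the probability that a uniformly chosen $(d+1)$-tuple has empty intersection. By independence of the $t$ iterations, the algorithm incorrectly returns PASS with probability at most $(1-p)^t$, so it suffices to establish $p \geq 1-\alpha$: with the prescribed $t = \log_\alpha \epsilon$ this gives $(1-p)^t \leq \alpha^t = \epsilon$. I would prove $p \geq 1-\alpha$ by contraposition. Suppose that strictly more than an $\alpha$-fraction of $(d+1)$-tuples have nonempty intersection. The Fractional Helly Theorem in its sharp (Kalai) form then guarantees a point that lies in at least $\bigl(1 - (1-\alpha)^{1/(d+1)}\bigr) n$ of the sets. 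Applying Bernoulli's inequality $(1-\alpha)^{1/(d+1)} \leq 1 - \alpha/(d+1)$ upgrades this to a point common to at least $\frac{\alpha}{d+1} n$ sets, contradicting the hypothesis that every such subcollection has empty intersection. Hence $p \geq 1-\alpha$, as desired.

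The running-time bound follows at once: there are $t = \log_\alpha \epsilon$ iterations, each testing emptiness of an intersection of $d+1$ convex sets of $O_d(1)$ complexity in time depending only on $d$, so the total work is independent of $n$. The main obstacle in the whole argument is the quantitative soundness step: Helly's theorem by itself only produces a single bad $(d+1)$-tuple from an inconsistent collection, which is far too weak to lower-bound the \emph{fraction} $p$. The Fractional Helly Theorem is the essential tool that converts ``far from intersecting'' into a linear-in-$\alpha$ lower bound on the fraction of empty $(d+1)$-tuples; the rest is routine probability and accounting.
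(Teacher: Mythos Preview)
Your proposal is correct and follows essentially the same route as the paper: trivial completeness, soundness via the contrapositive of the Fractional Helly Theorem to bound the fraction of intersecting $(d+1)$-tuples by~$\alpha$, and a routine running-time remark. The only minor difference is that the paper quotes the Katchalski--Liu form of Fractional Helly, which for $q=d+1$ yields $\beta = \alpha/(d+1)$ directly, whereas you invoke Kalai's sharp bound $\beta = 1-(1-\alpha)^{1/(d+1)}$ and then apply Bernoulli's inequality to reach the same threshold.
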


In order to prove the theorem, we should show that if any $\frac{\alpha}{d+1} n$ sets have an empty intersection, then the probability of picking a $(d+1)$-tuple whose intersection is empty and returning FAIL is high. For that purpose, we use the following theorem by Katchalski and Liu~\cite{FractionalHelly}.

\begin{fractional}
	Let $q>d>0$ be integers and let $F$ be a family of $n$ convex sets in $\mathbb{R}^d$ with $\alpha \binom{n}{q}$ of the $q$-tuples intersecting. Then there is a point in at least $\beta n$ of the sets, where $\beta = \left(\frac{\alpha}{\binom{q}{d}}\right) ^ {\frac{1}{q-d}}$.
\end{fractional}

Setting $q = d+1$, we obtain the following corollary.

\begin{corollary} \label{corollary_fractional}
	If the maximal number of sets from $P_1,...,P_n$ that intersect is less than $\frac{\alpha}{d+1} n$, then less than $\alpha \binom{n}{d+1}$ of the $(d+1)$-tuples intersect. Namely, at least $(1-\alpha) \binom{n}{d+1}$ of the $(d+1)$-tuples have empty intersection.
\end{corollary}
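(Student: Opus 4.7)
My plan is to derive the corollary as the direct contrapositive of the Fractional Helly Theorem, after specializing it to $q = d+1$.

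First, I would substitute $q = d+1$ into the formula $\beta = \left(\frac{\alpha}{\binom{q}{d}}\right)^{\frac{1}{q-d}}$. Since $\binom{d+1}{d} = d+1$ and $q - d = 1$, this collapses immediately to $\beta = \frac{\alpha}{d+1}$. So the specialized statement reads: if at least $\alpha \binom{n}{d+1}$ of the $(d+1)$-tuples have nonempty intersection, then some point lies in at least $\frac{\alpha}{d+1}n$ of the sets.

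Next, I would take the contrapositive. Assume the maximal number of sets sharing a common point is strictly less than $\frac{\alpha}{d+1}n$; this is precisely the negation of the conclusion of Fractional Helly at this parameter. Hence the hypothesis must also fail, giving strictly fewer than $\alpha \binom{n}{d+1}$ intersecting $(d+1)$-tuples. Since the total number of $(d+1)$-tuples is $\binom{n}{d+1}$, the remaining tuples, numbering strictly more than $(1-\alpha)\binom{n}{d+1}$ (which certainly implies at least $(1-\alpha)\binom{n}{d+1}$), must have empty intersection.

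There is essentially no obstacle here: the only thing to verify is the arithmetic $\binom{d+1}{d} = d+1$ and $q - d = 1$, which makes $\beta$ reduce cleanly to $\frac{\alpha}{d+1}$ and aligns the threshold appearing in the corollary with the threshold produced by Fractional Helly. The substance of the corollary lies entirely in the Fractional Helly Theorem; the corollary is just a convenient repackaging so that Theorem~\ref{thm_correctness} can be proved by bounding the probability that a single uniformly random $(d+1)$-tuple has nonempty intersection.
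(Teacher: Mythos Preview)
Your proposal is correct and matches the paper's approach exactly: the paper simply states ``Setting $q=d+1$, we obtain the following corollary,'' and your computation $\binom{d+1}{d}=d+1$, $q-d=1$, hence $\beta=\frac{\alpha}{d+1}$, together with the contrapositive, is precisely what that one line entails.
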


Now we are ready to prove the correctness of the algorithm.

\begin{proof}[Proof of Theorem \ref{thm_correctness}]
	If the intersection of all of the $n$ sets is nonempty, then the intersection of every $d+1$ sets is nonempty, and the algorithm returns PASS.
	
	Assume that the intersection of every $\frac{\alpha}{d+1} n$ sets is empty. Then, by Corollary~\ref{corollary_fractional}, there are less than $\alpha \binom{n}{d+1}$ choices of $d+1$ sets such that their intersection is nonempty, which means that the probability to pick such sets (and not returning FAIL) in each round is smaller than $\alpha$. Therefore, the probability that our algorithm returns FAIL is at least $1-\alpha ^ t$. Setting $t = \log _\alpha \epsilon$, the probability to return FAIL is at least $1-\epsilon$.
	
	The running time of the algorithm is $\log _\alpha \epsilon$ times the time it takes to compute the intersection of $d+1$ convex sets. Note that in general the complexity of a convex set is not bounded. Since we assume that the complexity of each set depends only on $d$, then the running time is $O(f(d) \cdot \log _\alpha \epsilon)$, which does not depend on $n$.
\end{proof}

\section*{Acknowledgments}
I would like to thank Ronitt Rubinfeld for her interest and for suggesting me to write this note.

\bibliographystyle{plainnat}
\bibliography{bibliography}

\end{document}